\theoremstyle{plain}
\newtheorem{theorem}{Theorem}[section]
\newtheorem{corollary}[theorem]{Corollary}
\newtheorem{lemma}[theorem]{Lemma}
\newcommand{\blind}{0}
\begin{document}

\def\spacingset#1{\renewcommand{\baselinestretch}%
{#1}\small\normalsize} \spacingset{1}

%%%%%%%%%%%%%%%%%%%%%%%%%%%%%%%%%%%%%%%%%%%%%%%%%%%%%%%%%%%%%%%%%%%%%%%%%%%%%%

\if0\blind
{
  \title{\bf A model-free subdata selection method for classification}
  \author{Rakhi Singh\hspace{.2cm}\\
    Department of Mathematics and Statistics, Binghamton University\\
    }
  \maketitle
} \fi

\if1\blind
{
  \bigskip
  \bigskip
  \bigskip
  \begin{center}
    {\LARGE\bf A model-free subdata selection method for classification}
\end{center}
  \medskip
} \fi

\bigskip
\begin{abstract}
Subdata selection is a study of methods that select a small representative sample of the big data, the analysis of which is fast and statistically efficient. The existing subdata selection methods assume that the big data can be reasonably modeled using an underlying model, such as a (multinomial) logistic regression for classification problems. These methods work extremely well when the underlying modeling assumption is correct but often yield poor results otherwise. In this paper, we propose a model-free subdata selection method for classification problems, and the resulting subdata is called PED subdata. The PED subdata uses decision trees to find a partition of the data, followed by selecting an appropriate sample from each component of the partition. Random forests are used for analyzing the selected subdata. Our method can be employed for a general number of classes in the response and for both categorical and continuous predictors. We show analytically that the PED subdata results in a smaller Gini than a uniform subdata. Further, we demonstrate that the PED subdata has higher classification accuracy than other competing methods through extensive simulated and real datasets.
\end{abstract}

\noindent%
{\it Keywords:} Big Data; Binary response; CART trees; Multi-class classification; Random forests; Subsampling; 
\vfill

\newpage
\spacingset{1.75} % DON'T change the spacing!

%\spacingset{1} %Change it back to the above line
\section{Introduction}
\label{sec:intro}
The presence of substantially large datasets opens avenues for scientific inquiry and empirical progress in processes; concurrently, it introduces a multitude of challenges, for example, storage and analysis time. Presumably, despite the hardware advances, regular and everyday computing systems are inadequately equipped to expeditiously analyze such large datasets within reasonable time frames. Furthermore, even if such an analysis were plausible in theory, the feasibility is hindered by the impractical electricity consumption associated with sustained machine operations over a prolonged duration. Some statistical methods for analyzing big data include bags of little bootstraps by \cite{kleiner2014scalable}, divide-and-conquer \citep[][for example]{lin2011aggregated,chen2014split,song2015split,nicole2022} and sequential updating for streaming data \citep[][for example]{schifano2016online,xue2020online}. These methods are comprehensively reviewed in \cite{wang2016statistical}. Subdata selection is another novel area of research, which enables modeling and inference on a ``representative subset" of the data called \textit{subdata}. The field primarily focuses on choosing good subdata and analyzing it in a way that does not compromise statistical efficiency compared to analyzing the entire data.

Subdata selection methods often focus on identifying parameters of the underlying model, be it linear model \citep{ma2015leveraging, ma2015statistical, ting2018optimal, wang2019information, deng2023,singh_stufken_2023}, a logistic regression model \citep{wang2018optimal,cheng2020information}, non-linear models \citep{yu2023information}, softmax regression \citep{yao2023optimal}, hierarchical data using linear mixed effects model \citep{zhuwang2024}, or a Gaussian Process model \citep{chang2023predictive}. These approaches indirectly also identify a good subdata for prediction. For regression problems, \cite{joseph2021supervised} developed a data compression method that provides a systematic model-free alternative for prediction.  Other methods \citep{mak2018support, joseph2022split, vakayil2022data, zhang2023optimal} emulate the joint distribution of the big data. They are typically useful for splitting the dataset into training and test datasets where maintaining the same distribution among the different datasets is crucial. But, mimicking the same distribution as that of the big data need not be helpful for prediction or classification. For example, if the big data is heavy-tailed or has outliers, one can potentially build a better model by simply ignoring few extreme values. Other types of subdata selection methods \citep{meng2020more,shi2021model} are based on space-filling designs that intuitively aim to fill the $p$-dimensional space as possible, especially in terms of lower-dimensional projections. More recently, \cite{chang2024Predictive} developed a method for large-scale computer models based on expected improvement optimization, which is a model-free alternative to achieving superior prediction accuracy. The readers are referred to \cite{yu2023review} for a comprehensive review and a detailed comparison of these methods. 

In this article, our interest is in finding a good subdata for classification problems. Therefore, our options for existing subdata selection methods are limited. A few possibilities include probabilistic methods based on A-optimal or L-optimal designs for logistic and softmax regression \citep{wang2018optimal, yao2023optimal}, deterministic method based on the D-optimality of logistic regression \citep{cheng2020information}, and the support points based method to emulate the joint distribution of features \citep{vakayil2022data}. However, the first three methods are model-dependent and tend to perform poorly for a misspecified model. In addition, the underlying model is often unknown and more complex than a softmax regression. Thus, the subdata associated with a specific model is unlikely to perform well on real or simulated data. The methods that emulate the joint distribution of the feature space are more space-filling than the uniform distribution but may not lead to the best classification results, especially if there are sparser regions in the feature space. In addition, these methods typically only work for continuous features. The big data should probably be analyzed using more complex methods like decision trees or neural nets, especially if one cares more about the classification accuracy than the explainability of the models. If the data is to be analyzed using decision trees or neural nets, a model-free subdata selection method is necessary. However, to the best of the authors' knowledge,  model-free subdata selection methods do not exist for classification problems, especially for situations when some predictors are categorical.

This paper proposes a novel subdata selection method called Partition-Enabled Design of subdata (or PED, for short). Rooted in Proto-Indo-Aryan, PED (pronounced as \textipa{/pe:\:r/}) is a synonym for trees. The PED subdata first partitions the data using CART trees. Then, we select more observations from the component with the higher misclassification error in this initial partition. With PED subdata, we can decide on an appropriate over- and under-sampling proportion for each class. Additionally, our method provides a natural solution to handle a large number of predictor variables, with their nature being either continuous or categorical. While the other classification methods, such as neural nets, can be used on the PED subdata, our data is naturally suited more for datasets that work well with the classification and regression trees (CART), such as the datasets where the different classes can be identified from the axis-aligned partition in the $\mathbf{X}$-space. 

The rest of the paper is organized as follows. A motivating example is provided in Section~\ref{sec-motiv}. We then introduce the proposed method and study its properties and benefits in Section \ref{sec:method}. The performance of PED subdata is compared to other competing methods through simulations and real data in Section \ref{sec:numcom}. Finally, we end with the concluding remarks in Section \ref{sec:conc_remarks}.

\section{Motivating example}\label{sec-motiv}
We first provide an illustrative example of the target being simulated from a straightforward radial function for the two standard normal predictor variables. We consider 2500 observations in the full data. The left panel in Figure \ref{fig-mot} visually represents the response behavior. The majority class (92\%) is shown in the green circles, which are observations in the interior of the circle. The brown triangles (5\%) represent the observations outside the 95\% quantile of the radius of the circle.
In contrast, the remaining 3\% observations corresponding to the yellow squares are the ones that have a radius smaller than the 3\% quantile. The middle panel shows the 500 selected observations from a uniform random selection; the approximate percentages of the respective classes are the same as the full data. But, we don't select nearly enough observations from the two minority classes. The right panel shows the 500 selected subdata observations from the proposed PED method. We oversample the first minority class, undersample the second minority class, and undersample the majority class, resulting in the corresponding brown triangles (16\%), yellow squares (2.2\%), and green circles (81.8\%). 

\begin{figure}
\centering
\caption{The full data with three classes represented in different colors: green circles (92\%), brown triangles (5\%), and yellow squares (3\%). Middle panel has selected $n=500$ uniformly selected observations. The approximate class percentages for the middle panel are the same as that of the full data. The right panel has the $n=500$ selected observations from the PED subdata and the approximate class percentages are green circles (81.8\%), brown triangles (16\%), and yellow squares (2.2\%).}%\vspace{-0.5em}
\label{fig-mot}
\includegraphics[width=0.99\linewidth, trim=0cm 4cm 0cm 6cm,clip=true]{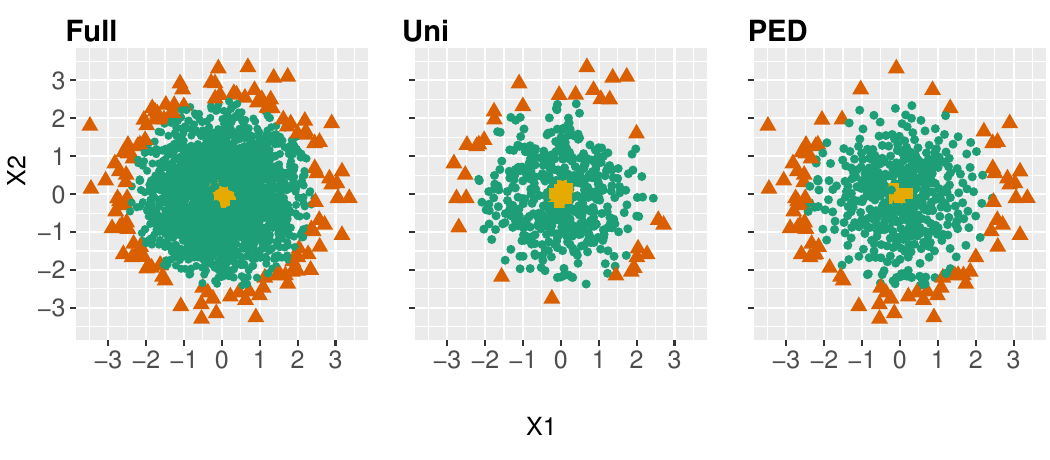}%\hspace{1em} 
\end{figure}

The subdata in the middle panel of Figure~\ref{fig-mot} is more likely to classify test observation in (say), $X1 \le -1$ and $X2 \in [-2,-1]$ as ``green circle", using random forests, which is not their actual class. Therefore, intuitively, methods like a uniform random selection, which emphasize a proportional representation of each category in the subdata, do not seem suitable for classification problems, especially when the class sizes are imbalanced. Our method is designed to ensure an appropriate representation of each class in the subdata, where the appropriateness is defined in terms of classification accuracy. Such optimization does not necessarily result in oversampling each minority class because the decision of oversampling classes also depends on the ease of classification. As in Figure~\ref{fig-mot}, PED subdata undersamples the ``yellow squares" because they are all clustered together in one center rectangle, and an axis-aligned partition method, such as CART trees, will easily classify these observations. In contrast, PED subdata exhibits a pronounced inclination to choose more observations along the left and right boundaries of $X1$ and $X2$. This preference arises due to the inherent difficulty CART trees encounter in discriminating between the ``green circles" and ``brown triangles". We will see in future sections that PED subdata, which, unlike uniform sample, does not retain the same proportion of classes in the subdata, leads to a smaller misclassification error or higher classification accuracy.

\section{Partition-enabled design of subdata}
\label{sec:method} We first provide a brief background on the random forests for classification, including the notations used throughout the paper. We then introduce our method and provide the supporting proofs, followed by the advantages and disadvantages of PEDs. 

\subsection{Brief background and notations}\label{sec-back}
TThe random forests construct a non-parametric, data-adaptive estimate of the unknown underlying function $g(.)$ using features. Since we are interested in a classification problem, the target outcome can take one of the $K$ values, $k=1,\dots,K$. Let $\mathcal{D} = (\mathbf{X},\mathbf{y})$ be the full data, where $\mathbf{X}$ is an $N\times p$ matrix with $N$ observations and $p$ independent variables or features and $\mathbf{y}$ is the corresponding $N\times 1$ target vector assuming one of the $K$ values. In what follows, we will use the terms features and response to denote $\mathbf{X}$ and $\mathbf{y}$, respectively. We also assume that the observations are independent. 

A CART tree finds a partition of the dataset $\mathcal{D}$ into $L$ strata, say, $\wp = (\wp_1,\dots, \wp_{\ell},\dots,\wp_{L})$ such that the 
\begin{equation}
\label{eq-minTree}
\sum_{\ell = 1}^{L}\sum_{i \in \wp_{\ell}} Error_{y_{i}, \hat{y}_{i}},
\end{equation}
is minimized, where $\hat{y}_{i}$ is the predicted class for the observations in the $\ell$th stratum and $i \in \wp_{\ell}$. The $\ell$th stratum corresponds to the $\ell$th leaf node of the tree, and both terms will be used interchangeably throughout this article. The ``Error" in \eqref{eq-minTree} can take several forms for classification, including the misclassification error, Gini index, and cross-entropy or deviance \citep[][see Section 9.2.3]{hastie2009elements}. The three metrics behave similarly, but the latter two are more commonly used since they are differentiable and more sensitive to the changes in the node probabilities. For $\ell$th node with $N_{\ell}$ observations, let $\hat{p}_{\ell, k}$ denotes the proportion of observations in the class $k$. Then, the Gini index for node $\ell$ is defined as 
\begin{equation}
\label{eq-Gini}
\sum_{k = 1}^{K}\hat{p}_{\ell, k} (1- \hat{p}_{\ell, k}).
\end{equation}
We use Gini throughout the paper. Since it is computationally infeasible to consider every possible partition \citep{james2013introduction}, trees are constructed using a greedy approach of recursive binary splitting. A recursive binary procedure splits a dataset into two subsets using a splitting criterion. Each split is characterized by the predictor variable and the splitting value on which a split is done. The predictor $X_j$ is randomly selected at each node from $p$ predictor variables. Let a split be made at a node $t$ on the splitting value $s$ of $j$th variable $X_j$. Then, the left and right daughter nodes of $t$ are $t_L$ and $t_R$ depending on whether $X_j \le s$ or $X_j > s$. We use the original Gini-based CART split criterion \citep{breiman1996bagging}. The splitting procedure is repeated until the tree height reaches a predetermined level or the terminal node has, at most, a pre-specified number of observations. The last grown nodes are called the terminal nodes or leaves of the tree. As noted earlier, a tree partitions the data; that is, a tree with $L$ (say) terminal nodes or leaves results in $L$ mutually exclusive and exhaustive pieces of $\mathcal{D}$. Every observation that falls into the terminal node $\ell$ will be assigned the predicted class equal to the mode of the response values in the node $\ell$. 

A random forest builds $ntree$ such trees, each time using a random (bootstrap) sample from $\mathcal{D}$. In addition, $X_j$ is chosen randomly from a randomly chosen set of $mtry$ predictors at each split. Then, a random forest classification at a test observation $\mathbf{x}_{test}$ is made by taking the majority vote from these $ntree$ trees. The $ntree$ is fundamentally a non-tunable parameter (increasing the number of trees always results in better classification accuracy), and the $mtry$ and other tuning parameters can be tuned for minor improvements in accuracy \citep{probst2019hyperparameters}. Typically, the default choices of hyperparameters in packages work well. We use the R package \texttt{ranger} \citep{wright2020ranger} and set $ntree$ and $mtry$ to be at 100 and $p/3$, respectively, along with other default choices of the hyperparameters.

\subsection{Description of PED methodology}
PED methodology first creates a CART-based coarse partition of the full data. Utilizing a stratified sampling approach, the number of samples to be selected from each stratum are then calculated such that the average Gini impurity for prediction on test data is minimized. Finally, twinning is employed within each stratum to select the desired number of observations. 

\subsubsection{Tree-based partition}
As mentioned in Section~\ref{sec-back}, CART trees partition the $X$-space by utilizing a loss function in the target space. Terminal nodes of fully grown trees tend to have a small Gini error, but they can lead to overfitting the training data. Therefore, the trees are typically grown to a certain depth, or the fully-grown trees are pruned. The trees with restricted depth are likely to have (a) a few terminal nodes with small Gini since no more splitting can be done in these nodes, and (b) the remaining nodes with large Gini since these nodes would have been split further if a deeper tree was allowed. As a result, trees of a certain depth tend to have different nodes with different Gini values. For the first step of the PED methodology, we strive to build a coarse partition of the full data. It is assumed that the available computational resources prohibit building CART trees on the full data all at once. It is also understood that a coarse partition is necessary at this stage to avoid overfitting, even though it yields a few strata and has a relatively larger Gini error than a potentially deeper tree. 

Using random subsets of size $t_s$ and all $p$ predictor variables, we build $t_n$ CART trees with maximum depths in $\{3,\dots, t_d\}$. Then, we use these $t_n(t_d-2)$ trees to identify the predicted class for the full data. These predicted values are then used to estimate the total Gini error for each of the $t_n(t_d-2)$ trees on the full data. The tree with the smallest total Gini error provides a partition of the full data and will be passed to the next stage. Note that in the next stage, this tree will be utilized to sample $n$ observations with at least $t_h$ observations in each stratum. Allowing any depth $t_d$ could result in a higher number of terminal nodes, making it impractical to sample from them all. Therefore, we restrict ourselves to finding the tree with the smallest Gini among all trees with no more than $n/t_h$ number of terminal nodes. The algorithm for the tree-based partition is formalized in Algorithm \ref{algo1}.

\begin{algorithm}[hbtp]
	\caption{Stage (I): Tree-based partition}\label{algo1}
	\SetKwInOut{Input}{inputs}
	\SetKwInOut{Output}{output}
	\Input{the sample size $n$, the size of a random sample $t_s$, the maximum depth for trees $t_d$, the number of trees for each depth $t_n$, the minimum number of observations to be considered in sample from each stratum $t_h$, the full data $(\mathbf X, \mathbf y)$ and its size $N$} 
	\For{$ j_1=1 \rightarrow t_n $}{
   	   	\For{$ j_2=3 \rightarrow t_d $}{
   	   Let $\mathbf{X}_{t_s,p}$ be a $t_s\times p$ matrix based on a uniform random sample of size $t_s$ from the rows of $\mathbf{X}$,  and $\mathbf{y}_{t_s}$ be the corresponding target vector\;
   	   Build a CART tree on $\mathbf{y}_{t_s}$ and $\mathbf{X}_{t_s,p}$, with tuning parameters $mtry = p$, $max.depth = j_2$ and other default parameters\;
   	   Use the tree built in line 4 to identify the node for each observation in the full data $(\mathbf X, \mathbf y)$\; 			   Compute the number of items ($N_{\ell}$, say), mean, and Gini ($G_{\ell}$, say) in each node\;
   	   Compute the total Gini of the full data based on the predicted classes obtained by fitting a tree to full data in line 5\;
       Save the Gini, node labels for each observation, and $(N_{\ell}, G_{\ell})$ for each tree\;
}
}
	Find the index of tree with the smallest total Gini among all trees with the number of terminal nodes no more than $n/t_h$\;
	\Output{Return the best tree with its node labels and $(N_{\ell}, G_{\ell})$ } 
\end{algorithm}

Line 2 of Algorithm \ref{algo1} shows that the depth of our trees is restricted. With a restricted depth, we get a coarse partition that avoids overfitting. For example, Figure \ref{fig-part} illustrates the effect of varying the maximum depth for the response function described in Section~\ref{sec-motiv}. Each color represents a different stratum, and we can observe that as the depth of the tree increases, the identification of different classes improves. Increasing the depth beyond a point leads to excessive splitting in the areas corresponding to the majority class in Figure~\ref{fig-mot}, which may lead to overfitting and is less desirable. In line 4 of Algorithm \ref{algo1}, the trees are constructed on a dataset of size $t_s$, where $t_s \ll N$. Lines 5--7 of Algorithm \ref{algo1} predict the classes of the target on the full data using the tree built on a small sample. While one cannot build the tree on the full data, classifying a test example is usually cheaper. This generalization to the full data enables us to identify the tree structure that works best for the full data. Since the shallow trees are susceptible to the data, the term ``best" tree is loosely defined. We will see in Section \ref{sec:numcom} that such a partition results in a good classification performance. 

\begin{figure}[hbtp]
\centering
\caption{Strata using different maximum depths for the example in Figure~\ref{fig-mot}. Different colors represent observations in different strata. Note that the partition is axis-aligned and the edges in the plot look rough because of plotting restrictions.}
\label{fig-part}
\includegraphics[width=0.99\linewidth, trim=0cm 4cm 0cm 6cm,clip=true]{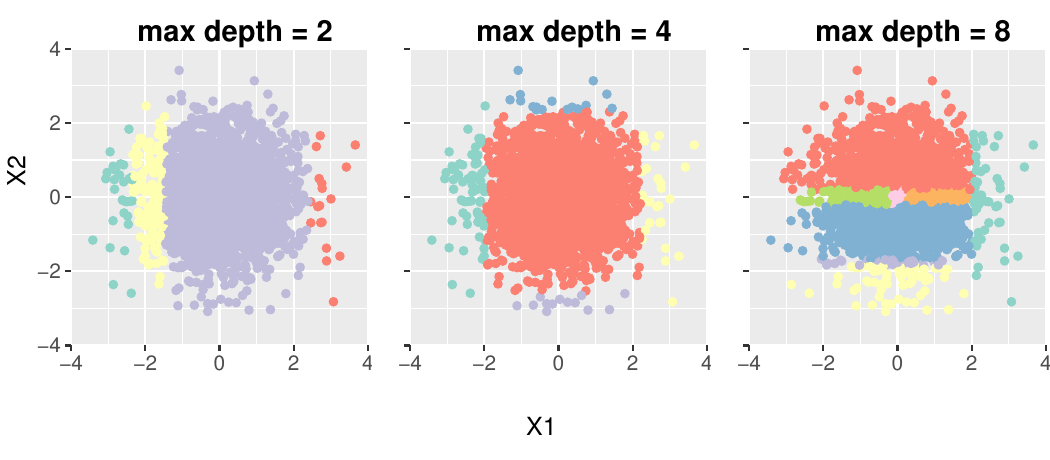}
% to be replaced later by the following line -- the following is a very heavy PDF file.
%\includegraphics[width=0.9\linewidth, trim={6mm} {8mm} {6mm} {8mm}]{DiffPlots_v2.pdf}
\end{figure}
\vspace{-1cm} 

\subsubsection{Stratified sampling utilizing the partition}
Let the full data $\mathcal{D} = (\mathbf X, \mathbf y)$ be partitioned in $L$ strata using Algorithm \ref{algo1}. Let the partition be denoted by $\wp = (\wp_1,\dots, \wp_{\ell}, \dots, \wp_L)$. Note that Algorithm~\ref{algo1} returns the partition $\wp$ along with the number of observations $N_{\ell}$ and Gini $G_{\ell}$ for the stratum $\ell$ in the full data, $\ell = 1,\dots,L$. We use a stratified sampling approach to determine the number of samples to be selected from each stratum. Let $n_{\ell}$ denote the number of observations to be sampled from $\ell$th stratum of size $N_{\ell}$.  Since the test data is expected to utilize the same data generation mechanism as the training data, it is reasonable to assume that the partition obtained in Algorithm \ref{algo1} on the training data will generalize well to the test data. With this assumption, Theorem \ref{th1} finds an optimal sample size per stratum. 

\begin{theorem}
\label{th1}
Let the full dataset can be partitioned into $\wp = (\wp_1,\dots, \wp_{\ell}, \dots, \wp_L)$ where the $\ell$th stratum has $N_{\ell}$ observations with the Gini $G_{\ell}$. Assume that the test dataset can be characterized through the same partition structure as provided by $\wp$. Then, the total expected Gini on the test dataset will be minimized when $$n_{\ell} \propto \sqrt{N_{\ell}\;G_{\ell}},$$
where $n_{\ell}$ observations are included in subdata from the $\ell$th stratum and $\sum n_{\ell} = n$.
\end{theorem}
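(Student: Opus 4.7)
The plan is to reduce the theorem to minimizing a separable convex objective of the form $\sum_\ell N_\ell G_\ell/n_\ell$ subject to the budget $\sum_\ell n_\ell = n$, after which the result follows from a standard Lagrange-multiplier calculation. The delicate step is translating ``total expected Gini on the test dataset'' into this concrete functional form.

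First I would use the fact that, under simple random subsampling of $n_\ell$ observations from stratum $\ell$ (of size $N_\ell$), the class-proportion estimator $\hat p_{\ell,k}$ is unbiased for the stratum proportion $p_{\ell,k}$ and satisfies $\sum_{k=1}^{K}\mathrm{Var}(\hat p_{\ell,k}) = \frac{G_\ell}{n_\ell}\cdot\frac{N_\ell-n_\ell}{N_\ell-1}$. Because the test data is assumed to share the partition $\wp$, every test observation that falls in stratum $\ell$ is predicted using the subdata-derived vector $\hat p_\ell$. A direct expansion of the per-test-observation expected squared-error Gini,
\begin{equation*}
E\!\left[\sum_{k=1}^{K}\bigl(\hat p_{\ell,k}-\mathbf 1\{y=k\}\bigr)^{2}\right],
\end{equation*}
over both the subdata randomness and the true test label (whose marginal within stratum $\ell$ is $p_{\ell,\cdot}$) collapses the cross term and leaves $G_\ell+\sum_k \mathrm{Var}(\hat p_{\ell,k})$. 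Summing over all test observations, and using that the test stratum sizes are proportional to $N_\ell$, yields (up to an $n_\ell$-independent additive term and the finite-population correction, which is negligible when $n_\ell$ is small relative to $N_\ell$)
\begin{equation*}
f(n_1,\dots,n_L)\;\propto\;\sum_{\ell=1}^L \frac{N_\ell G_\ell}{n_\ell}.
\end{equation*}

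With the objective in hand, the rest is routine. Forming the Lagrangian $f-\lambda\bigl(\sum_\ell n_\ell-n\bigr)$ and setting $\partial/\partial n_\ell=0$ gives $N_\ell G_\ell/n_\ell^{2}=\lambda$, hence $n_\ell=\sqrt{N_\ell G_\ell/\lambda}$; the constraint $\sum_\ell n_\ell=n$ then fixes $n_\ell=n\sqrt{N_\ell G_\ell}\big/\sum_j \sqrt{N_j G_j}$, i.e.\ $n_\ell\propto\sqrt{N_\ell G_\ell}$. Convexity of $n\mapsto 1/n$ on $(0,\infty)$ makes $f$ convex on the open positive orthant, so this interior critical point is in fact the global minimizer.

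The main obstacle is justifying the Step~2 identification of the per-test-observation expected Gini with $G_\ell+\sum_k \mathrm{Var}(\hat p_{\ell,k})$: a naive use of the sample Gini, $E[\hat G_\ell]=G_\ell-\sum_k \mathrm{Var}(\hat p_{\ell,k})$, has the opposite sign in $1/n_\ell$ and would push the optimizer to the boundary, whereas the squared-error expansion above \emph{adds} the estimator variance and is precisely what yields the $\sqrt{N_\ell G_\ell}$ proportionality. Pinning down this connection between Gini impurity and squared-error prediction loss evaluated on the test stratum is the only nontrivial move; once it is in place, the Lagrange-multiplier step is routine.
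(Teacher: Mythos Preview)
Your argument is correct and leads to the stated allocation, but it is \emph{not} the route the paper takes. The paper interprets ``total expected Gini'' literally as the expectation of the plug-in Gini computed from the subdata proportions,
\[
\mathcal{G}=E\Bigl[\sum_{\ell}\tfrac{N_\ell}{N}\sum_k \hat p_{\ell,k}(1-\hat p_{\ell,k})\Bigr],
\]
expands $E[\hat p_{\ell,k}^2]=p_{\ell,k}^2+p_{\ell,k}(1-p_{\ell,k})/n_\ell$, obtains $\mathcal{G}=\sum_\ell \tfrac{N_\ell}{N}\bigl(G_\ell-G_\ell/n_\ell\bigr)$, and then invokes a Lagrange multiplier to arrive at $n_\ell\propto\sqrt{N_\ell G_\ell}$. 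You instead interpret the criterion as the expected Brier (squared-error) loss on a test point, which yields $G_\ell+G_\ell/n_\ell$ per stratum, and optimize $\sum_\ell N_\ell G_\ell/n_\ell$.

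Both routes produce the same first-order stationarity condition $N_\ell G_\ell/n_\ell^2=\lambda$, hence the same square-root allocation. The difference is exactly the sign issue you flag: in the paper's formulation the $1/n_\ell$ term enters with a \emph{negative} sign, so the stated allocation is the \emph{maximizer}, not the minimizer, of the expected plug-in Gini (minimizing that quantity would drive all $n_\ell$ to the boundary). Your Brier-score reading repairs this, because there the estimator variance is \emph{added} and the objective is genuinely convex with an interior minimum. So your derivation is not only a legitimate alternative; it also supplies the second-order justification that the paper's Lagrange step leaves implicit. The minor extras you carry along---the finite-population correction and the explicit convexity check---are harmless refinements the paper omits.
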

\begin{proof}
With the assumption that the test dataset can be characterized through the same partition as the full data, the total expected Gini on a test dataset is
\begin{equation}
  \mathcal{G} = E\bigg[\sum_{\ell = 1}^L \bigg(\frac{N_{\ell}}{N}\bigg) \sum_{k = 1}^{K}\hat{p}_{\ell, k} (1- \hat{p}_{\ell, k}\big) \bigg], \label{eq-predgini}
\end{equation} 
where $\hat{p}_{\ell, k}$ denotes the predicted proportion of observations in the class $k$ and stratum $\ell$, where predictions are obtained by using the subdata. The weights $\frac{N_{\ell}}{N}$ in \eqref{eq-predgini} ensure that the distribution of the number of observations in the partition $\wp$ remains the same between the full and the test data. Now, $E[\hat{p}_{\ell, k}] = p_{\ell, k}$ and $V[\hat{p}_{\ell, k}] = \frac{p_{\ell, k}(1 - p_{\ell, k})}{n_{\ell}}$, where  $p_{\ell, k}$ is the proportion of observations in the class $k$ and stratum $\ell$ from the full data. Therefore, $E[\hat{p}^2_{\ell, k}]$ simplifies to $\frac{p_{\ell, k}(1 - p_{\ell, k})}{n_{\ell}} + p_{\ell, k}^2$. Then, the total expected Gini in \eqref{eq-predgini} becomes
\begin{equation}
  \mathcal{G} = \sum_{\ell = 1}^L \bigg\{1 - \sum_{k = 1}^{K}{p}^2_{\ell, k} - \sum_{k = 1}^{K}\frac{p_{\ell, k}(1 - p_{\ell, k})}{n_{\ell}}\bigg\} . \label{eq-predvaradapted}
\end{equation} 
Using the Lagrange multiplier to minimize $\mathcal{G}$ with the constraint that $\sum_{\ell} n_{\ell} = n$, we get the desired result.  
\end{proof}

\begin{corollary}\label{cor1}
Subdata with sample size $n_{\ell} \propto \sqrt{N_{\ell}\;G_{\ell}}$  has a smaller total Gini as defined in  \eqref{eq-predgini} than the subdata with $n_{\ell} \propto N_{\ell}$, which corresponds to the uniform random selection.
\end{corollary}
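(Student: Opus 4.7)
The plan is to compare the two allocations by substituting them directly into the $n_\ell$-dependent part of the formula for $\mathcal{G}$ obtained in the proof of Theorem~\ref{th1}, and then to invoke the Cauchy--Schwarz inequality. Since $\sum_\ell (N_\ell/N)G_\ell$ does not depend on the allocation, the ordering of $\mathcal{G}(n^{\text{PED}})$ against $\mathcal{G}(n^{\text{unif}})$ is controlled entirely by the sum $S(n)\equiv\sum_\ell N_\ell G_\ell/n_\ell$, with smaller $S$ corresponding to smaller $\mathcal{G}$ in the direction dictated by the Lagrangian calculation that produced the PED allocation.

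First, I would substitute the two feasible allocations, both of which satisfy $\sum_\ell n_\ell=n$. The PED allocation $n_\ell^{\text{PED}} = n\sqrt{N_\ell G_\ell}/\sum_j\sqrt{N_j G_j}$ gives, after a one-line computation,
\[
S(n^{\text{PED}}) \;=\; \frac{\bigl(\sum_\ell \sqrt{N_\ell G_\ell}\bigr)^{2}}{n},
\]
while the uniform allocation $n_\ell^{\text{unif}} = nN_\ell/N$ yields
\[
S(n^{\text{unif}}) \;=\; \frac{N\sum_\ell G_\ell}{n}.
\]

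Finally, I would apply the Cauchy--Schwarz inequality with $u_\ell=\sqrt{N_\ell}$ and $v_\ell=\sqrt{G_\ell}$ to obtain $\bigl(\sum_\ell \sqrt{N_\ell G_\ell}\bigr)^{2} \le \bigl(\sum_\ell N_\ell\bigr)\bigl(\sum_\ell G_\ell\bigr) = N\sum_\ell G_\ell$, with equality iff $G_\ell$ is proportional to $N_\ell$. This yields $S(n^{\text{PED}})\le S(n^{\text{unif}})$ and hence $\mathcal{G}(n^{\text{PED}})\le \mathcal{G}(n^{\text{unif}})$, which is the corollary. There is no real obstacle beyond bookkeeping the sign and the dropped $N_\ell/N$ factor in the display leading to \eqref{eq-predvaradapted}; one could equivalently invoke Theorem~\ref{th1} directly and observe that $n^{\text{PED}}$ is the unique (up to the budget constraint) Lagrange-critical allocation, so any distinct feasible allocation---in particular $n^{\text{unif}}$---must give a strictly larger $\mathcal{G}$ unless $G_\ell/N_\ell$ happens to be constant across strata.
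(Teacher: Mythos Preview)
The paper gives no separate proof of Corollary~\ref{cor1}; it is stated as an immediate consequence of Theorem~\ref{th1}: since $n_\ell\propto\sqrt{N_\ell G_\ell}$ is the Lagrange minimizer of $\mathcal{G}$ over all allocations with $\sum_\ell n_\ell=n$, it must in particular beat the proportional allocation $n_\ell\propto N_\ell$. Your final paragraph already records exactly this route, so you have the paper's argument covered.

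Your main argument is genuinely different and more informative. By isolating the allocation-dependent part $S(n)=\sum_\ell N_\ell G_\ell/n_\ell$, substituting both allocations, and applying Cauchy--Schwarz with $u_\ell=\sqrt{N_\ell}$, $v_\ell=\sqrt{G_\ell}$, you obtain a direct, self-contained comparison that does not rely on convexity or on the Lagrange computation behind Theorem~\ref{th1}. It also pins down the equality case ($G_\ell/N_\ell$ constant across strata), which the paper does not mention. Both approaches are short; yours buys an explicit quantitative inequality, while the paper's buys brevity.

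One caution on the ``sign bookkeeping'' you allude to: it is not merely cosmetic. As the display~\eqref{eq-predvaradapted} is written, the $n_\ell$-dependent term enters $\mathcal{G}$ with a \emph{negative} sign (and the $N_\ell/N$ weight from~\eqref{eq-predgini} has been dropped), so literally minimizing that expression would \emph{maximize} $S(n)$, not minimize it. This is an internal inconsistency in the paper's derivation, not in your reasoning: your Cauchy--Schwarz inequality $S(n^{\text{PED}})\le S(n^{\text{unif}})$ is correct, and once one reads Theorem~\ref{th1} as intended (PED minimizes the allocation-dependent cost $S$), your conclusion $\mathcal{G}(n^{\text{PED}})\le\mathcal{G}(n^{\text{unif}})$ follows. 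Just be aware that if you cite~\eqref{eq-predvaradapted} verbatim, the sign there does not match the direction you need.
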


Two problems may arise in practice: (1) $G_{\ell}$ could be 0, and (2) few $N_{\ell} < n_{\ell}$. Even though (1) is less likely, it is theoretically possible to achieve a pure node (with $G_{\ell}=0$) or a node with $N_{\ell}=1$. Therefore, in practice, we find the optimal values of  $n_{\ell} \propto \sqrt{N_{\ell}\;G_{\ell}}$ such that $\min\{t_h, N_{\ell}\} \le n_{\ell} \le N_{\ell}$. 

Selecting very few observations randomly from such stratum yields a relatively unstable subdata since a uniform sample does not satisfactorily mimic the population distribution for small sample sizes \citep{mak2018support}. twinning \citep{vakayil2022data} is a fast sampling method that ensures that the subdata has the same joint distribution as the full data. Utilizing twinning is a worthwhile cause, especially since some of our $n_{\ell}$ can be very small. Once the $n_{\ell}$ values are calculated based on Theorem~\ref{th1}, we utilize twinning within each stratum to collect the observations in the subdata. Therefore, for the $i$th stratum, we select $n_{\ell}$ observations from $N_{\ell}$ using twinning. Algorithm \ref{algo3} outlines PED methodology starting from the partition obtained via Algorithm \ref{algo1}. The \texttt{twinning} package in R may not give exactly $n_{\ell}$ observations depending on the value of the ratio $r_{\ell} = N_{\ell}/n_{\ell}$. If $r_{\ell}$ is integer, then twinning selects exactly $n_{\ell}$ observations. If $r_{\ell}$ is rounded up to the nearest integer, one selects less than $n_{\ell}$ observations. However, if $r_{\ell}$ is rounded down to the nearest integer, one selects more than $n_{\ell}$ observations. In Algorithm \ref{algo3}, we round down $r_{\ell}$ to its nearest integer, and then we randomly select $n_{\ell}$ observations. Therefore, the PED subdata in each stratum has the same joint distribution as the corresponding population stratum. Corollary \ref{cor1} implies that the PED subdata obtained from the optimal sample sizes as per Theorem \ref{th1} would perform better than the twinning or the uniform subdata. We demonstrate this via numerical comparisons in Section \ref{sec:numcom}.

\begin{algorithm}[hbtp]
	\caption{Partition-Enabled Design (PED) of subdata }\label{algo3}
	\SetKwInOut{Input}{inputs}
	\SetKwInOut{Output}{output}
	\Input{the sample size $n$, the size of a random sample $t_s$, the maximum depth for trees $t_d$, the minimum number of observations to be considered in sample from each stratum $t_h$, the number of trees for each depth $t_n$, the full dataset $(\mathbf X, \mathbf y)$ and its size $N$} 
	Use Algorithm \ref{algo1} to find the partition of the full data\;
	Use the partition in Step 1 to find the sample sizes, $n_{\ell}$ (say) such that $n_{\ell}$ is as in Theorem~\ref{th1}\;
    Let $r_{\ell} = \lfloor N_{\ell}/n_{\ell} \rfloor$
	\For{all strata}{
            Select $n'_{\ell}$ observations from each node using \textsc{twinning} using $r_{\ell}$\;
            \If{$n'_{\ell} >= n_{\ell}$}{
             Randomly select $n_{\ell}$ observations from the $n'_{\ell}$ selected observations in step 4.
             }
	}		
	\Output{Return the subdata} 
\end{algorithm}

\subsection{Computational complexity of PED} 
\cite{sani2018computational} showed that the computational complexity of building a CART tree on $n$ observations with $m$ variables is $O(mn \log{n})$. Further, building a random forest on full data consisting of $t$ trees is $O(tpN\log{N})$. Theorem~\ref{thComp} provides a worst-case complexity of obtaining the PED subdata and building a random forest on the obtained subdata. But, we first need the following lemma. 

\begin{lemma}\label{lemComp}
The sum of the average time complexity of twinning each of the two component datasets of size $rN$ and $(1-r)N$ for $r \in [0,1]$  is at most than that of twinning the entire dataset. 
\end{lemma}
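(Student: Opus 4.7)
The plan is to reduce the lemma to a standard algebraic inequality about the function $x \log x$. Twinning relies on nearest-neighbor queries organized via a $kd$-tree, so its average-case cost on $m$ points scales as $T(m) = c\, m \log m$ for a universal constant $c$ (as established in \citealp{vakayil2022data}). Granting this, the content of the lemma is the inequality
\[
rN \log(rN) + (1-r)N \log((1-r)N) \;\le\; N \log N, \qquad r \in [0,1].
\]

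First I would expand the left-hand side using $\log(rN) = \log r + \log N$ and $\log((1-r)N) = \log(1-r) + \log N$, which gives
\[
(rN + (1-r)N)\log N + N\bigl[r \log r + (1-r)\log(1-r)\bigr] \;=\; N\log N + N\bigl[r \log r + (1-r)\log(1-r)\bigr],
\]
with the standard convention $0 \log 0 := 0$. Next I would invoke the non-negativity of the binary entropy $H(r) = -r\log r - (1-r)\log(1-r) \ge 0$ on $[0,1]$, which equivalently says that $r \log r + (1-r)\log(1-r) \le 0$. Substituting back yields the target inequality; multiplying through by the constant $c$ recovers the statement for the average time complexity.

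The boundary cases $r \in \{0,1\}$ correspond to one component being empty and reduce to the trivial equality $N \log N \le N \log N$ under the $0\log 0 = 0$ convention. The only real obstacle is justifying that twinning's average runtime on $m$ points scales as $m \log m$; I would handle this by citing the $kd$-tree based construction in the original twinning paper, which is the same complexity figure the next result (Theorem \ref{thComp}) will build upon. Everything else is a short computation using the concavity of the logarithm, so the proof is essentially a one-line application of the entropy inequality once the complexity model is pinned down.
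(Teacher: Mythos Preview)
Your proposal is correct and follows essentially the same route as the paper: both reduce the claim to the inequality $rN\log(rN)+(1-r)N\log((1-r)N)\le N\log N$, expand via $\log(rN)=\log r+\log N$, and finish by noting $r\log r+(1-r)\log(1-r)\le 0$ on $[0,1]$. Your write-up is slightly more careful in naming this as the binary-entropy inequality and in handling the $r\in\{0,1\}$ boundary with the $0\log 0=0$ convention, but the argument is the same.
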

\begin{proof}
 The average time complexity for twinning on $n$ observations in $m$ variables is $O(mn\log{n})$. Suppose the dataset with $N$ observations is divided into two subsets of size $N_1 = rN$ and $N_2 = (1-r)N$, with  $r \in [0,1]$. From the convexity argument, it follows that 
\begin{eqnarray}
    \nonumber && Nr\log(Nr) + dN(1-r)\log(N(1-r)) \\
    \nonumber &= &dN\{rlogN + rlogr + \log(N) + \log(1-r) - r\log(N) - r\log(1-r)\}\\
   \nonumber  & = & dN\{\log(N) + r\log(r) + (1-r)\log(1-r)\}\\
    \nonumber & \le &  dN\log(N),
\end{eqnarray}
where the last inequality follows from the properties of the log function for $r \in [0,1]$.
\end{proof}

We are now ready to prove the following result. 

\begin{theorem}\label{thComp}
The random forest fitted on the PED subdata obtained in Algorithm~\ref{algo3} has the overall worst-case computational complexity of 

$$O(t(p\sqrt{N} \log(\sqrt{N}) + N \log(\sqrt{N}))+ pN \log{N}),$$  
where $t$ is the total number of trees built in Algorithm~\ref{algo1}.
\end{theorem}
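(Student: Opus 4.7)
The plan is to add the costs of three successive stages: (i) building the partition in Algorithm~\ref{algo1}, (ii) performing the stratified twinning step inside Algorithm~\ref{algo3}, and (iii) fitting the final random forest on the chosen subdata. I will then verify that the sum of the three contributions matches the stated bound.

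For stage (i), take the tuning choice $t_s = O(\sqrt{N})$ and write $t = t_n(t_d - 2)$ for the total number of trees grown across all depths. Using the CART complexity of \cite{sani2018computational}, building one tree on $t_s$ observations costs $O(p\sqrt{N}\log\sqrt{N})$, giving $O(tp\sqrt{N}\log\sqrt{N})$ in total. Each of these $t$ trees is then used to classify every observation of the full data; since the depth of any tree is bounded by $t_d = O(\log\sqrt{N})$, a single prediction is $O(\log\sqrt{N})$ and a full pass is $O(N\log\sqrt{N})$ per tree, aggregating to $O(tN\log\sqrt{N})$. Computing the per-stratum $(N_\ell, G_\ell)$ and the total Gini is $O(N)$ per tree and is absorbed into the prediction term. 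These two pieces produce the first two summands of the target bound.

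For stage (ii), the chosen tree induces a partition $\wp = (\wp_1,\dots,\wp_L)$ with strata sizes $N_1,\dots,N_L$ summing to $N$, and Algorithm~\ref{algo3} calls twinning within each stratum. The hard part is that Lemma~\ref{lemComp} only controls a two-way split, so I will extend it to an arbitrary $L$-way partition by induction on $L$: at each step peel off one stratum, apply Lemma~\ref{lemComp} to bound twinning on that stratum plus twinning on the union of the remaining strata by twinning on their union, and recurse on the remainder. This yields
$$\sum_{\ell=1}^{L} p N_\ell \log N_\ell \;\le\; p N \log N,$$
so the total twinning cost is $O(pN\log N)$. The subsequent uniform sub-selection down to exactly $n_\ell$ observations inside each stratum is linear in $n$ and contributes a negligible term.

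For stage (iii), fitting a random forest on the final subdata of size $n$ with a constant number of trees costs $O(pn\log n)$, which is dominated by $O(pN\log N)$ from stage~(ii) since $n \ll N$. Summing the three contributions gives exactly $O\!\left(t\bigl(p\sqrt{N}\log\sqrt{N} + N\log\sqrt{N}\bigr) + pN\log N\right)$, as claimed. Apart from the inductive generalisation of Lemma~\ref{lemComp}, every step is a routine aggregation of the per-tree costs already cited in the paper, so the induction on $L$ is the only place where real work is needed.
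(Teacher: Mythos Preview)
Your proposal is correct and follows essentially the same three-stage decomposition (Algorithm~\ref{algo1} cost, stratified twinning cost, final random forest cost) as the paper's own proof. The one noteworthy difference is in stage~(ii): you extend Lemma~\ref{lemComp} to an $L$-way partition by induction to obtain $\sum_\ell pN_\ell\log N_\ell \le pN\log N$, whereas the paper simply argues that the worst case occurs when one stratum dominates and directly asserts the $O(pN\log N)$ bound; your inductive argument is the more careful of the two, but both lead to the same term and the same final complexity.
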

\begin{proof}
For each tree (outer for loop in Step 1), Step 4 of Algorithm \ref{algo1} has the computational complexity $O(p\sqrt{N} \log{\sqrt{N} })$. We have trees with the depth at most $\log(\sqrt{N})$ nodes. The $\log(\sqrt{N})$ comparisons are made for a test observation to reach a terminal node. Therefore, obtaining predictions for $N$ observations from each tree will have a worst-case complexity of $O(N\log(\sqrt{N}))$. Steps 6--7 will then have at most complexity of $O(N)$ for each tree. Overall, Steps 1--9 of Algorithm \ref{algo1} has the complexity of 
$O(t(p\sqrt{N} \log(\sqrt{N})+ N\log(\sqrt{N}))$. Selecting the best tree has $O(1)$ complexity. Therefore, with $t = t_n (t_d-2)$ being the number of constructed trees, overall Algorithm \ref{algo1} has the complexity of $O(t(p\sqrt{N} \log(\sqrt{N}) + N \log(\sqrt{N})))$.

Selecting the number of correct observations is a relatively cheap operation and has the complexity of $O(\log(\sqrt{N}))$. From Lemma \ref{lemComp}, we see that the average complexity of twinning two component datasets is upper bounded by the average complexity of twinning the entire dataset. If Algorithm~\ref{algo1} partitions the dataset with $N$ observations in vaguely equal strata, then using parallelization, the twinning complexity can be reduced to $O(p\sqrt{N}\log{N})$. The parallelization does not help if the dataset is such that Algorithm~\ref{algo1} results in one major stratum and several other small strata. For such a case, the worst-case complexity for the twinning is $O(pN\log{N})$. Therefore, PED subdata has the overall worst-case computational complexity of $O(t(p\sqrt{N} \log(\sqrt{N}) + N \log(\sqrt{N}))+ pN \log{N}).$

Further, since building a random forest on PED subdata with $t$ trees is $O(tpn\log{n})$, the overall complexity is the sum of $O(t(p\sqrt{N} \log(\sqrt{N}) + N \log(\sqrt{N}))+ pN \log{N})$ and $O(tpn\log{n})$. Finally, since $O(tpn\log{n})$ is a dominant term, we get the desired result.
\end{proof} 

Assuming that $N\gg p \ge t/2$ and, in particular, that $t/2 \le p\le \sqrt{N}$, the overall computational complexity reduces to $O(pN \log{N})$, which is the same as that of the twinning. Note also that the computational complexity of running random forests on the full data is $O(tpN \log{N})$. Interestingly, the twinning, the random forests on full data, and the random forests built on the PED subdata all have the same complexity, assuming $t$ to be a constant. 

The actual computational time of PED subdata can be further reduced if (a) trees in lines 2-9 of Algorithm~\ref{algo1}, and (b) twinning on different subdata, can be run in parallel. In addition, note that the line 7 of Algorithm~\ref{algo1} demands to compute the total Gini on the full dataset, which is unnecessary and computationally challenging for huge datasets. We consider a random sample of size $5\times 10^6$ (= 5M) if the full dataset has more than 5M observations for our implementation. Further, one can also take uniform samples instead of twinning within each stratum, especially if $n_{\ell}$'s are close to the corresponding $N_{\ell}$ or both $n_{\ell}$ and $N_{\ell}$ are large. The random forest package in R, \texttt{ranger}, is fast and uses parallelization. Our implementation of PED subdata also utilizes parallel computing. In Section~\ref{sec:numcom}, via numerical comparisons, we will show that the PED subdata is faster than the competing alternatives, including twinning and the random forests on the full data, especially for large $N$ and $p$. All logs in this section were to the base 2. 

\subsection{Usage and benefits of PED}
The tuning parameters are an essential consideration in the practical use of any method. PED has four tuning parameters: the size of the random sample on which the partition is created, $t_s$; the maximum depth for trees that determine the right partition, $t_d$; the minimum number of observations to be considered in a sample from each stratum, $t_h$; and the number of trees built for each depth, $t_n$. The two most important parameters are $t_s$ and $t_d$. Large values of $t_s$ and $t_d$ result in a partition that overfits the initial random sample taken to create the partition and does not generalize well to the full data. Large values for $t_s$ and $t_d$ also result in a significant increase in the computation time. Consistency results for random forests and CART trees with additional additivity assumption on the regression function are studied in \cite{scornet2015consistency} and \cite{klusowski2021universal}. These analytical studies suggest that the subsampling ratio and the depth of the tree are important tuning parameters that affect the consistency of the predictions. The depth of the order of $\log_2(n)$ provides consistent prediction estimates, where trees are built on $n$ observations. We suggest using $t_s = \sqrt{N}$ and $t_d = \log_2(t_s)$ as the default values for the tuning parameters in Algorithm \ref{algo1}. Though rooted in regression trees, our parameter tuning methodology demonstrates its efficacy in addressing the classification problems. 

Algorithm \ref{algo1} also uses $t_n$ and $t_h$ as tuning parameters. Increasing the number of trees $t_n$ results in a larger computation time, and the corresponding statistical gain is minimal; therefore, fixing $t_n$ at a small value is desirable. We find that building ten trees of each depth, so $t_n= 10$ is both time-efficient and sufficient. The threshold parameter $t_h$ is consistent with the \texttt{maxnodes} parameter of random forests, which helps in preventing overfitting. The threshold $t_h$ ensures the selection of at least a few observations from a relatively pure stratum. Without setting a threshold, one may have zero observations from such strata, which is undesirable. We set $t_h = 5$ as is the case with random forests. While we do not provide theoretical justification for these settings, we see in Section \ref{sec:numcom} that these choices consistently deliver good results. 

PED is a model-free subdata selection method that utilizes the relationship between the predictor and response variables. It is computationally and statistically efficient, works for a large number of a mixture of categorical and continuous variables, and has no model-based assumptions. Having either of these characteristics is desirable, and each subdata selection method discussed here lacks one or the other elements. By being built on a CART tree-based methodology, PED satisfies all these characteristics and is a powerful method. PED has the same limitations as CART trees. The slight variations in the dataset can result in significant changes in the tree structure, causing our Algorithm \ref{algo1} to give a different partition. The PED subdata performs well for all our simulated studies despite this instability.

\section{Illustration}
\label{sec:numcom}
We evaluate PED subdata on several simulated and real datasets by fitting random forests and measure the corresponding output in terms of the accuracy of classification, area under curve (AUC), and computation times. 

\subsection{Simulated data}
We consider several examples, some common to the random forest literature and others to the subdata selection literature. We compare the performance of the PED subdata to that of twinning and the uniform subdata. We also compare the model-based logistic IBOSS method of \cite{cheng2020information} wherever possible. After collecting the subdata, we fit a random forest model with 100 trees and $mtry = p/3$ if $p\ge 5$ or $p$ otherwise. Here $mtry$ denotes the number of features used for each split in a tree.  We use the R package \texttt{ranger} with the default values for other hyperparameters. For all simulated studies in this section, we use the big data size $N = 10^5$, the validation dataset size equals $10^4$, and the two sample sizes considered are $1\%$ and $5\%$ of $N$, that is, 1000 and 5000. The simulation is repeated 50 times, implying that new training and test data are generated 50 times following the same data generation mechanism. In addition, the PED subdata uses the following values of the hyperparameters across all simulations: $t_s =\sqrt{N}$, $t_d = \log_2{t_s}$, $t_h=5$, and $t_n = 10$.

We begin by evaluating the performance of PED subdata on the motivating example discussed in Section~\ref{sec-motiv}. We have $p= 2$ independent features, which follow a standard normal distribution. The visual representation of the data is provided in the left panel of Figure~\ref{fig-mot}. For this example, the majority class (92\%) consists of the observations in the interior of the circle. The second class  (5\%) consists of the observations that lie outside the 95\% quantile of the radius of the circle, whereas the remaining 3\% observations are the ones that have a radius smaller than the 3\% quantile. Figure~\ref{fig-res} shows the performance of PED subdata compared to the full data, twinning, and uniform subdata on the validation dataset. The left panel shows the boxplot of the accuracy, whereas the right panel shows the boxplots of AUC, where the boxplots measure the variation across 50 iterations of the data generation mechanism. In both panels, the first set of boxplots uses either the full data (in pink) or the different subdata with $0.01N = 1000$ observations. In contrast, the right set demonstrates the performances for the subdata with $0.05N = 5000$ observations. The PED subdata performs superior to other subdata selection methods for both sample sizes. We also note that the performances for all subdata methods improve with larger sample sizes.  

\begin{figure}[hbtp]
\centering
\caption{Accuracy and AUC for the three-class example illustrated in Figure~\ref{fig-mot}. The x-axis for both plots demonstrates the subdata sizes for the three subdata selection methods: PED, Uni, and twinning.}
\label{fig-res}
\includegraphics[width=0.99\linewidth, trim=0cm 3cm 0cm 3cm,clip=true]{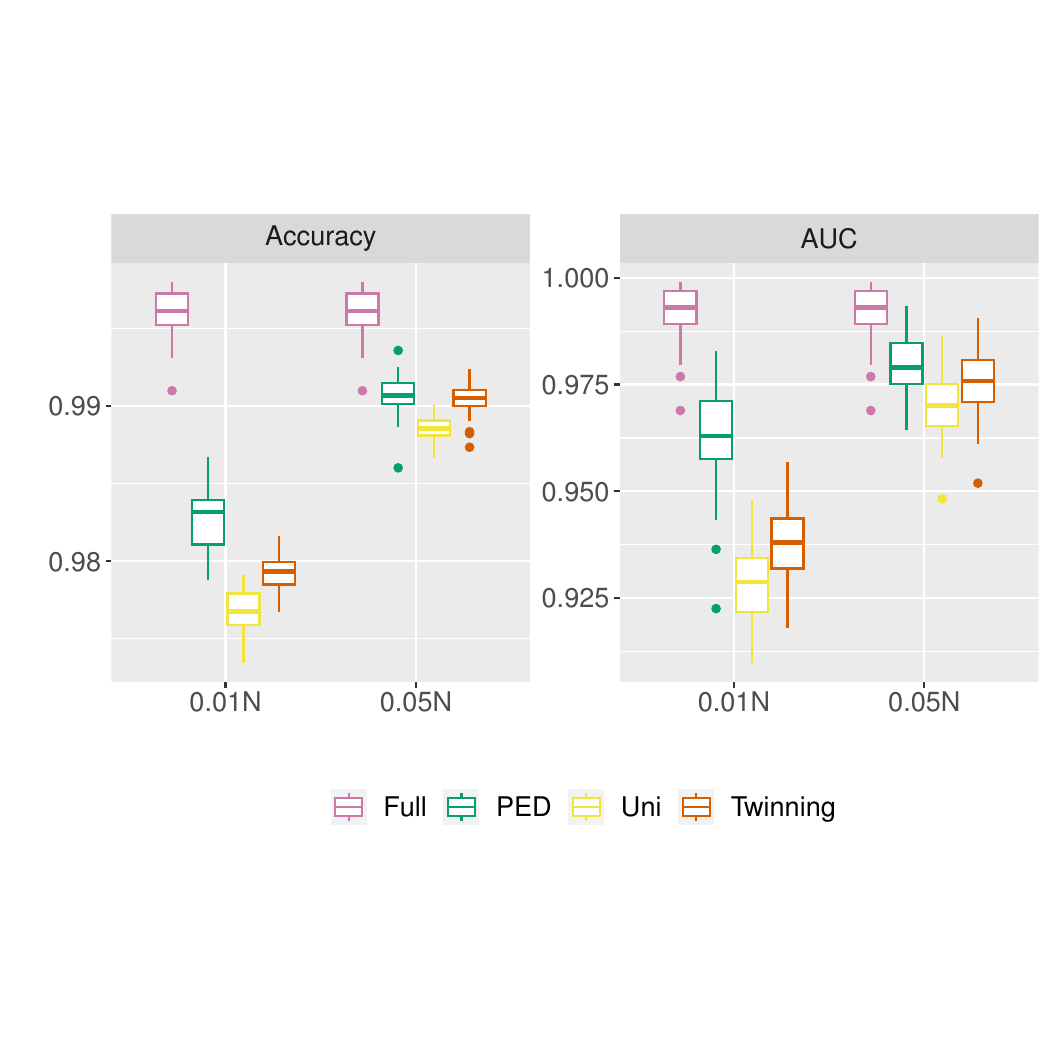}
\end{figure}

Next, we consider different classification datasets commonly used in the machine learning literature, especially with the tree-based classifiers \citep{breiman1996bias}. We consider the \texttt{waveform} function with $p=21$ features and $K=3$ classes and the \texttt{threenorm} function with $p=20$ features and $K=2$ classes. We also consider \texttt{threenorm}, \texttt{twonorm}, and \texttt{ringnorm} response functions with $p=2$ features and $K=2$ classes. We also consider the imbalanced version of the \texttt{threenorm} function, with 95\% of the data coming from one class and the remaining 5\% from another. The detailed definitions of these functions are provided in the Supplementary Material. We generate the data by using the R package \texttt{mlbench} \citep{mlbenchPackage}. Table~\ref{tabML1Acc} reports the classification accuracy for the different datasets and methods, whereas the corresponding AUC values are provided in the Supplementary Material. The higher the accuracy, the better the method. For most settings, random forests built on the full dataset have a larger accuracy than any subdata selection methods, which only use 1\% or 5\% of the available observations. While the accuracy numbers are fairly close, the PED subdata generally performs slightly better than the corresponding uniform or twinning dataset. %The superior performance of PED is especially visible for datasets with an imbalanced class distribution.  

\begin{table}[htbp]
  \centering
  \caption{Accuracy (in \%) for multiple simulated machine learning datasets with $p$ features and $K$ classes for the full dataset as well as the PED, uniform, and twinning subdata with sample sizes, $n = 0.1N$ and $0.05N$.}
    \begin{tabular}{|l|r|r|r|r|r|r|r|r|r|}\hline
       &  &  & & \multicolumn{3}{c|}{$n = 0.01N$} & \multicolumn{3}{c|}{$n = 0.05N$} \\
\cmidrule{5-10}  & $p$ & $K$&   \multicolumn{1}{c|}{Full}  & \multicolumn{1}{c|}{PED} & \multicolumn{1}{c|}{Uni} &  \multicolumn{1}{c|}{Twin}  & \multicolumn{1}{c|}{PED} & \multicolumn{1}{c|}{Uni} &\multicolumn{1}{c|}{Twin}  \\\hline
        Waveform  & 21    & 3     & 85.54 & 83.92 & 83.81 & 83.85 & 84.83 & 84.78 & 84.80 \\
    Threenorm & 2     & 2     & 88.05 & 88.05 & 87.35 & 87.31 & 88.26 & 87.71 & 87.69 \\
    Threenorm & 20    & 2     & 88.25 & 85.04 & 85.02 & 85.38 & 86.75 & 87.03 & 87.18 \\
    Imbalanced threenorm & 2     & 2     & 96.30 & 96.23 & 96.07 & 96.05 & 96.25 & 96.17 & 96.17 \\
    Imbalanced threenorm & 20    & 2     & 97.68 & 95.86 & 95.21 & 95.19 & 97.00 & 96.25 & 96.27 \\
    Ringnorm & 2     & 2     & 71.83 & 72.72 & 71.20 & 71.09 & 72.90 & 71.43 & 71.35 \\
    Twonorm & 2     & 2     & 97.39 & 97.30 & 97.10 & 97.13 & 97.40 & 97.29 & 97.28 \\
   \hline \end{tabular}%
  \label{tabML1Acc}%
\end{table}

Now, we consider the simulated datasets common in the subdata selection literature for categorical responses. These methods have an underlying model assumption, and the data is typically generated using either a logistic regression model or a softmax regression model, depending on the number of considered classes. For binary response, a logistic regression model is considered with seven features having the true $\bm{\beta} = (0.5,0.5,0.5,0.5,0.5,0.5,0.5)^T$. The variance-covariance matrix $\bm{\Sigma}$ is a $7\times 7$ matrix with diagonal elements one and off-diagonal elements equal to 0.5.  Following \cite{cheng2020information}, we consider the following covariate distributions $\mathcal{N}_7(0,\bm{\Sigma})$, $\mathcal{N}_7(1,\bm{\Sigma})$, $0.5\mathcal{N}_7(1,\bm{\Sigma})+0.5\mathcal{N}_7(-1,\bm{\Sigma})$, and $\bm{t}_3(0,\bm{\Sigma}/10)$. Here $\mathcal{N}_3$ denote a multivariate normal distribution, whereas $\bm{t}_3$ denote a multivariate $ t$ distribution with 3 degrees of freedom. The four cases are called Bin-MVN0, Bin-MVN1, Bin-Mix, and Bin-T3, respectively. The choice of different feature distributions leads to an imbalance in class distributions, further details of which are provided in the Supplementary Material. Table~\ref{tabBinary1Acc} reports the classification accuracy for these four covariate distributions, and the corresponding AUC values are provided in the Supplementary Material. In addition to the PED and uniform subdata, we add a random forest fitted on the IBOSS  \citep{cheng2020information} subdata to the comparison. The performance of the twinning subdata is omitted due to its extremely similar performance to the uniform subdata. We again observe that the PED subdata has larger accuracy than other subdata selection methods and is close to the accuracy of random forests built on the full data. 

\begin{table}[htbp]
  \centering
  \caption{Accuracy (in \%) for multiple simulated logistic regression datasets with $7$ features and $2$ classes for the full dataset as well as the PED, uniform, and IBOSS subdata with sample sizes, $n = 0.1N$ and $0.05N$.}
    \begin{tabular}{|l|r|r|r|r|r|r|r|}\hline
       &   & \multicolumn{3}{c|}{$n = 0.01N$} & \multicolumn{3}{c|}{$n = 0.05N$} \\
\cmidrule{2-8} &   \multicolumn{1}{c|}{Full}  & \multicolumn{1}{c|}{PED} & \multicolumn{1}{c|}{Uni} &  \multicolumn{1}{c|}{IBOSS}  & \multicolumn{1}{c|}{PED} & \multicolumn{1}{c|}{Uni} &\multicolumn{1}{c|}{IBOSS}  \\\hline
    Bin-MVN0 & 81.23 & 80.78 & 80.54 & 80.57 & 81.15 & 80.94 & 80.94 \\
    Bin-MVN1 &90.43 & 90.25 & 89.98 & 87.19 & 90.38 & 90.24 & 89.45 \\
    Bin-Mix & 76.07 & 75.60 & 75.26 & 75.29 & 75.97 & 75.71 & 75.73 \\
    Bin-T3 & 66.78 & 66.10 & 65.55 & 65.56 & 66.58 & 66.17 & 66.22 \\\hline 
    \end{tabular}%
  \label{tabBinary1Acc}%
\end{table}

Finally, we consider datasets with three classes. Following \cite{yao2023optimal}, we consider four cases for the covariate distribution. The data is generated from a softmax regression with three features, and the true parameter vector is $\bm{\beta} = (1,1,1,2,2,2)^T$. The variance-covariance matrix $\bm{\Sigma}$ is a $3\times 3$ matrix with diagonal elements one and off-diagonal elements equal to 0.5.  The four different cases are such that the predictors follow $\mathcal{N}_3(0,\bm{\Sigma})$, $\mathcal{N}_3(1.5,\bm{\Sigma})$, $0.5\mathcal{N}_3(1,\bm{\Sigma})+0.5\mathcal{N}_3(-1,\bm{\Sigma})$, and $\bm{t}_3(0,\bm{\Sigma})$ distributions, respectively. The four cases are called Mult-MVN0, Mult-MVN1.5, Mult-Mix, and Mult-T3, respectively. More details and the AUC results are again provided in the Supplementary Material. Figure~\ref{fig-resMult} demonstrates that the PED subdata performs superior to both uniform and twinning subdata. The PED subdata also performs close to that of the full data.

\begin{figure}[hbtp]
\centering
\caption{Accuracy for the simulated datasets from the softmax regression. The x-axis for all subplots demonstrates the subdata sizes for the three subdata selection methods: PED, Uni, and twinning.}
\label{fig-resMult}
\includegraphics[width=0.8\linewidth, trim=0cm 0cm 0cm 0cm,clip=true]{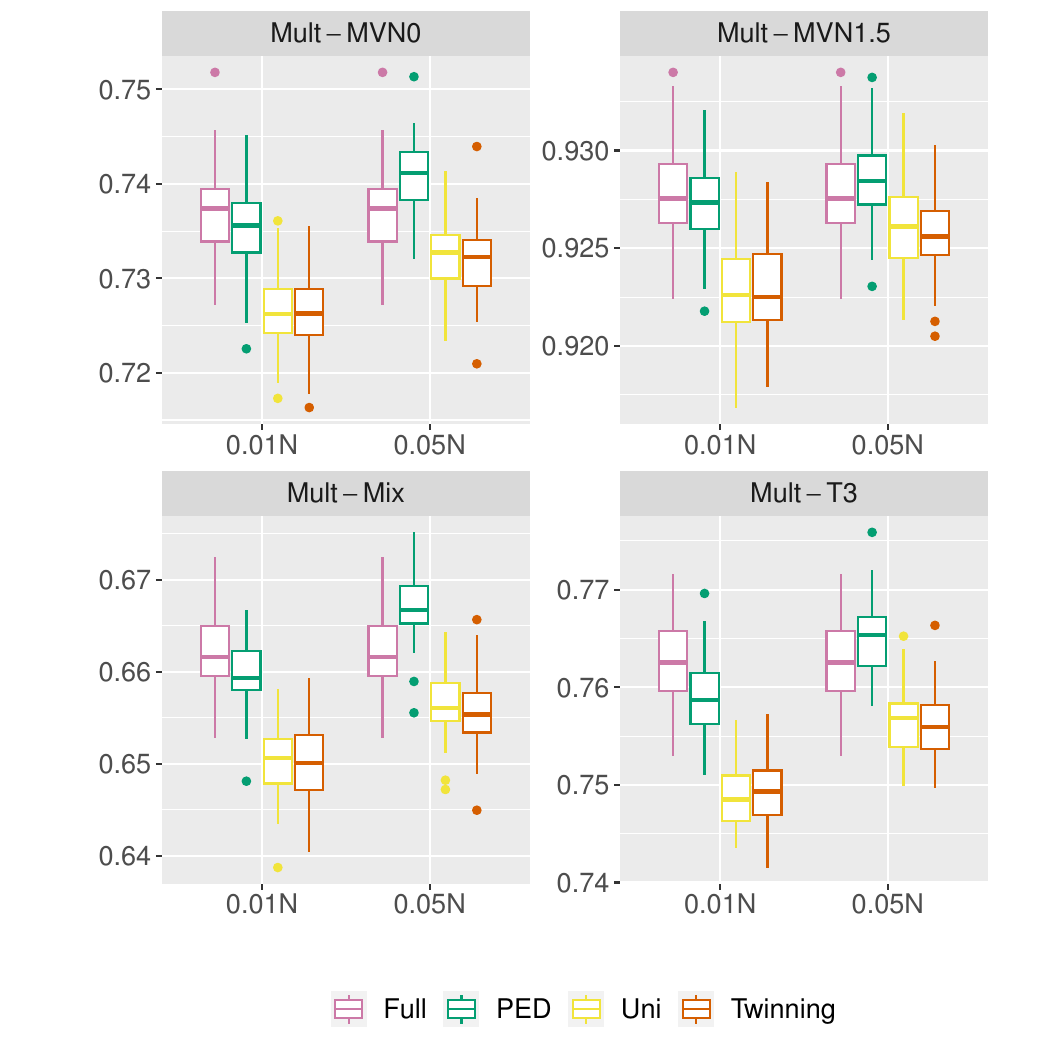}
\end{figure}

As for the computation times, we vary the size of the training dataset from $10^5$ to $10^7$  and the number of features in Tables~\ref{tabTimevaryingN} and ~\ref{tabTimevaryingK}, respectively. Both tables demonstrate that the twinning subdata has the largest computation time. The uniform subdata takes the smallest runtime, but the performance can often be increased using the PED subdata. The PED subdata also has considerably smaller runtimes than the random forests built on the full data, which is advantageous for the bigger datasets. The computation times for the PED subdata are not dependent on the size of the subdata compared to the twinning subdata. All computations are done on a Desktop with an AMD Ryzen Threadripper PRO 5955WX @4.00 GHz 16 cores and 64GB RAM.

\begin{table}[htbp]
  \centering
  \caption{Computation time (in mins) for the full data and PED, uniform, and twinning subdata with sample sizes, $n = 0.1N$ and $0.05N$. The number of features are fixed at $p=100$ and the number of observations vary from $10^5$ to $10^7$. The $^\dag$ represents that the program was exited after the said time.}
    \begin{tabular}{|l|r|r|r|r|r|r|r|}\hline
       &   & \multicolumn{3}{c|}{$n = 0.01N$} & \multicolumn{3}{c|}{$n = 0.05N$} \\
\cmidrule{2-8} &   \multicolumn{1}{c|}{Full}  & \multicolumn{1}{c|}{PED} & \multicolumn{1}{c|}{Uni} &  \multicolumn{1}{c|}{Twin}  & \multicolumn{1}{c|}{PED} & \multicolumn{1}{c|}{Uni} &\multicolumn{1}{c|}{Twin}  \\\hline
      $N =10^5$ & 0.22  & 0.32  & 0.01  & 0.17  & 0.33  & 0.01  &0.74 \\
   $N =10^6$ & 10.99 & 3.33  & 0.02  & 33.81 & 4.69  & 0.13  & 166.52 \\
    $N =10^7$  & 83.66 & 27.78 & 0.19  & 900+$^\dag$ & 44.51 & 1.89  & 900+$^\dag$ \\\hline 
    \end{tabular}%
  \label{tabTimevaryingN}%
\end{table}

\begin{table}[htbp]
  \centering
  \caption{Computation time (in mins) for the full data and PED, uniform, and twinning subdata with sample sizes, $n = 0.1N$ and $0.05N$. The number of observations are fixed at $p=100$ and the number of features vary from $10$ to $500$. The $^\dag$ represents that the program was exited after the said time.}
    \begin{tabular}{|l|r|r|r|r|r|r|r|}\hline
       &   & \multicolumn{3}{c|}{$n = 0.01N$} & \multicolumn{3}{c|}{$n = 0.05N$} \\
\cmidrule{2-8} &   \multicolumn{1}{c|}{Full}  & \multicolumn{1}{c|}{PED} & \multicolumn{1}{c|}{Uni} &  \multicolumn{1}{c|}{Twin}  & \multicolumn{1}{c|}{PED} & \multicolumn{1}{c|}{Uni} &\multicolumn{1}{c|}{Twin}  \\\hline

    10    & 0.94  & 0.77  & 0.01  & 0.51  & 0.81  & 0.02  & 1.80 \\
    100   & 10.99 & 3.33  & 0.02  & 33.81 & 4.69  & 0.13  & 166.52 \\
    500   & 60.73 & 25.95 & 0.10  & 110.62 & 30.38 & 0.70  & 459.66\\
   \hline 
    \end{tabular}%
  \label{tabTimevaryingK}%
\end{table}

\subsection{Real data}
We now consider two real datasets, one with a binary response and another with six classification categories.  

The binary research problem considered in the supersymmetric (SUSY) benchmark dataset \citep{baldi2014searching} distinguishes between a signal and a background process, where the signal process produces the supersymmetric particles. This dataset \citep{misc_susy_279} is available from the UCI Machine Learning Repository at the link: \url{https://archive.ics.uci.edu/ml/datasets/SUSY}. The dataset has 5M observations and 18 features. About 54\% of the observations correspond to the background processes. As suggested in the repository itself and \cite{wang2018optimal}, we consider the first 4.5M observations to be the full dataset for training, and the remaining 500,000 observations are considered as a test dataset. Table~\ref{tab-SUSY} reports the time (in mins) and the accuracy for the full dataset, PED, uniform, and twinning subdata. We consider sample sizes $1\%$ and $5\%$ of 4.5M for all subdata selection methods. First, note that the PED subdata has the best accuracy value among other subdata selection methods. The performance of the PED subdata is also closest to that of the full data. While one may claim that the differences are small, it is worth noticing that on a test dataset of size 500,000, a difference of $0.20\%$ between PED and Uni subdata implies the correct classification of $\sim$1018 extra processes, which we believe represents a significant improvement. 

\begin{table}[htbp]
  \centering
  \caption{Accuracy and time (in minutes) for the full dataset as well as the PED, uniform, twinning and IBOSS subdata with sample sizes $n = 0.1N$ and $0.05N$ on the SUSY dataset.}
    \begin{tabular}{|l|r|r|r|r|r|r|r|r|r|}\hline
       &   & \multicolumn{4}{c|}{$n = 0.01N$} & \multicolumn{4}{c|}{$n = 0.05N$} \\
\cmidrule{2-10} &   \multicolumn{1}{c|}{Full}  & \multicolumn{1}{c|}{PED} & \multicolumn{1}{c|}{Uni} &  \multicolumn{1}{c|}{Twin}  &\multicolumn{1}{c|}{IBOSS} & \multicolumn{1}{c|}{PED} & \multicolumn{1}{c|}{Uni} &\multicolumn{1}{c|}{Twin} &\multicolumn{1}{c|}{IBOSS}  \\\hline
Accuracy & 80.03 & 79.74 & 79.54 & 79.58 & 79.55 & 79.92 & 79.83 & 79.78 & 79.82  \\
    Time & 15.09 & 2.02  & 0.19  & 5.14  & 0.20  & 2.57  & 0.49  & 10.33 & 0.54    \\
    \hline \end{tabular}%
  \label{tab-SUSY} %
\end{table}

Next, we consider the forest cover type dataset \cite{misc_covertype_31}  for predicting the forest cover type from available cartographic variables. The original dataset contains 581012 observations, 10 continuous features, 41 binary features, and a response variable with seven categories. The seven categories with their respective proportions are 36.46\% (Spruce/Fir), 48.76\% (Lodgepole Pine), 6.15\% (Ponderosa Pine), 0.43\% (Cottonwood/Willow), 1.63\% (Aspen), 2.99\% (Douglas-fir) and 3.53\% (Krummholz). For IBOSS, we use the 10 continuous features as covariates, which measure geographical locations and lighting conditions. All 51 features are used for other methods. We use the 70\% of the dataset as training and the remaining 30\% as the test. We compute the accuracy and time for the full dataset, PED, uniform, and twinning subdata for twenty such training-test splits. The average accuracy and computation times across 20 repetitions are presented in Table~\ref{tab-Forest}. Results show that if the analysis can be done on the full data, then the subdata selection methods are not beneficial, neither from the computational perspective nor from the statistical. The twinning subdata has slightly higher accuracy than the twinning subdata for $5\%$ sample size, but the time taken by twinning is significantly higher than that taken by PED. Overall, the PED remains the superior choice among other subdata selection methods. 

\begin{table}[htbp]
  \centering
  \caption{Average Accuracy and time (in minutes) for the full dataset as well as the PED, uniform, twinning subdata with sample sizes $n = 0.1N$ and $0.05N$ on the forest cover type dataset.}
    \begin{tabular}{|l|r|r|r|r|r|r|r|r|r|}\hline
       &   & \multicolumn{3}{c|}{$n = 0.01N$} & \multicolumn{3}{c|}{$n = 0.05N$} \\
\cmidrule{2-8} &   \multicolumn{1}{c|}{Full}  & \multicolumn{1}{c|}{PED} & \multicolumn{1}{c|}{Uni} &  \multicolumn{1}{c|}{Twin}  & \multicolumn{1}{c|}{PED} & \multicolumn{1}{c|}{Uni} &\multicolumn{1}{c|}{Twin}  \\\hline
Accuracy & 96.32 & 78.89 & 77.24 & 78.01 & 86.07 & 85.06 & 86.09 \\
    Time &    0.43  & 0.49  & 0.05  & 2.11  & 0.56  & 0.06  & 10.15   \\
    \hline \end{tabular}%
  \label{tab-Forest} %
\end{table}

\section{Concluding remarks}
\label{sec:conc_remarks}
It is computationally challenging to build good prediction models for datasets with a large number of observations and a moderate number of features. For example, it takes about 83 minutes to build a random forest on the dataset with $10^7$ observations and 100 features. The time of 83 minutes is possible because of the complete parallelization of the random forest package in R; otherwise, the actual time would be much longer. The computers additionally need good memory storage for handling large datasets. Subdata selection is one emerging field that handles large datasets by selecting a representative big data sample. 

We are interested in a classification problem with an arbitrary number of categories. The existing subdata selection methods for classification are model-dependent since they are typically designed to achieve the best parameter estimates of the underlying generalized linear model. In practice, however, using advanced tools such as random forests for classification is more common. Using random forests as an analysis tool, we propose a novel model-free subdata selection method in this work. Our method, PED, first identifies a good partition of the big data by employing random small subsets of the big data. PED then selects an optimal sample from each stratum of the partition, where optimality is measured in terms of minimizing the expected Gini error on the test dataset. The optimal PED subdata is demonstrated to perform superior to other subdata selection methods with merely 1\% of 5\% of the big data size through extensive simulated and real datasets. We also observe that the performance of the PED subdata is close to that obtained by running the random forests on the full training data. The PED subdata is also computationally faster than other competing alternatives for subdata selection and for analyzing the big data, especially if either $N$ or $p$ is large.

\bigskip
\begin{center}
{\large\bf SUPPLEMENTARY MATERIAL}
\end{center}%

\begin{description}
\item The online Supplementary Material contains more performance results for the examples discussed in the main paper. The R code is available on request from the authors. 
\end{description}

\bibliographystyle{JASA}

\bibliography{Refs}
\end{document}